\newtheorem{theorem}{Theorem}
\newtheorem{remark}{Remark}
\newtheorem{lemma}{Lemma}
\begin{document}

\title{Capacity Results for Two Classes of\\ Three-Way Channels}
\author{\IEEEauthorblockN{Lawrence Ong}
\IEEEauthorblockA{School of Electrical Engineering and Computer Science, 
The University of Newcastle, Australia\\
Email: lawrence.ong@cantab.net}
\thanks{Lawrence Ong is the recipient of an Australian Research Council Discovery Early Career Researcher Award  (project number DE120100246).}
}
\maketitle

\begin{abstract}
This paper considers the three-way channel, consisting of three nodes, where each node broadcasts a message to the two other nodes. The capacity of the {\em finite-field} three-way channel is derived, and is shown to be achievable using a non-cooperative scheme without feedback. The same scheme is also shown to achieve the {\em equal-rate} capacity (when all nodes transmit at the same rate) of the {\em sender-symmetrical} (each node receives the same SNR from the other two nodes) phase-fading AWGN channel. In the light that the non-cooperative scheme is not optimal in general, a cooperative feedback scheme that utilizes relaying and network coding is proposed and is shown to achieve the equal-rate capacity of the {\em reciprocal} (each pair of nodes has the same forward and backward SNR) phase-fading AWGN three-way channel.
\end{abstract}

\section{Introduction}
The two-way channel~\cite{shannon61} models the scenario where two nodes exchange messages via a noisy channel. Despite its simple setup, the capacity (in the Shannon sense) of this channel is not known in general. One of the difficulties is to determine the optimal way to utilize feedback, i.e., how (and if) the node should process and transmit its past received signals. It has been shown that not utilizing the feedback can be strictly suboptimal~\cite{dueck79}. However, for the white additive Gaussian noise (AWGN) two-way channel, Han~\cite{han84} has shown that the capacity can be achieved without using feedback, i.e., using independent Gaussian codewords for the nodes.

The result that feedback is not useful in the AWGN two-way channel does not generalize to the case of more nodes. When there are more nodes, feedback can improve the achievable rates over non-feedback schemes via (i) coherent combining and (ii) relaying~\cite{covergamal79}. Coherent combining is achieved by correlating the nodes' inputs to achieve a higher signal-to-noise ratio (SNR); relaying is achieved by having a node to help in forwarding data when the quality of the direct link from the source to the destination is poor.

In a three-node network, different variations of message flows can take place~\cite{meulen71,meulen77}. These include the multiple-access channel~\cite{ahlswede71}, the relay channel~\cite{covergamal79}, and the broadcast channel~\cite{cover75b}---the capacity of the last two channels remains unknown to date. In this paper, we consider the {\em conferencing} three-way channel (subsequently referred to as the three-way channel), where each node broadcasts its message to the other two nodes.

For the AWGN three-way channel, Eswaran and Gastpar~\cite{eswarangastpar08} proposed a feedback scheme based on the modulated estimate correction (MEC) technique~\cite{kramer02} to exploit the coherent-combining gain. They showed that their proposed scheme achieves the sum-rate capacity under the following assumptions: (i) all nodes receives the same channel output; (ii) the transmit signals are subject to per-symbol power constraints; (iii) all nodes have the same transmitted power, receiver noise power, and channel gain.

In this paper, we consider two classes of three-way channels: the finite-field model and the phase-fading AWGN model. We show that the non-cooperative coding scheme (which does not use feedback) is optimal for the finite-field three-way channel. While the same scheme is not necessarily optimal for the phase-fading AWGN model in general, we show that it is optimal when (i) all nodes transmit at the same rate ({\em equal rate}) and (ii) each node receives an equal SNR from the other two nodes (which we term {\em sender symmetrical}).

We next propose a scheme that utilizes feedback. While the MEC scheme exploits coherent combining, random phase shifts in the phase-fading channel prevent any coding scheme to harvest the gain from coherent combining. Instead, we propose a scheme that utilizes feedback via relaying and network coding. The idea is for one node to help the other two nodes by relaying their messages for each other, and it does so using network coding. We show that this proposed scheme achieves the equal-rate capacity of the {\em reciprocal} (meaning that each node pair has the same SNR in the forward and backward links) phase-fading AWGN three-way channel.

The three-way channel with correlated sources has been studied by Lai et al.~\cite{lailiugamal06}.


\subsection{Main Results}

The main results of this paper are
\begin{enumerate}
\item the capacity of the finite-field three-way channel,
\item the equal-rate capacity of the sender-symmetrical phase-fading AWGN three-way channel, and
\item the equal-rate capacity of the reciprocal phase-fading AWGN three-way channel.
\end{enumerate}

\section{Channel Model}

\begin{figure}[t]
\centering
\includegraphics[width=7.5cm]{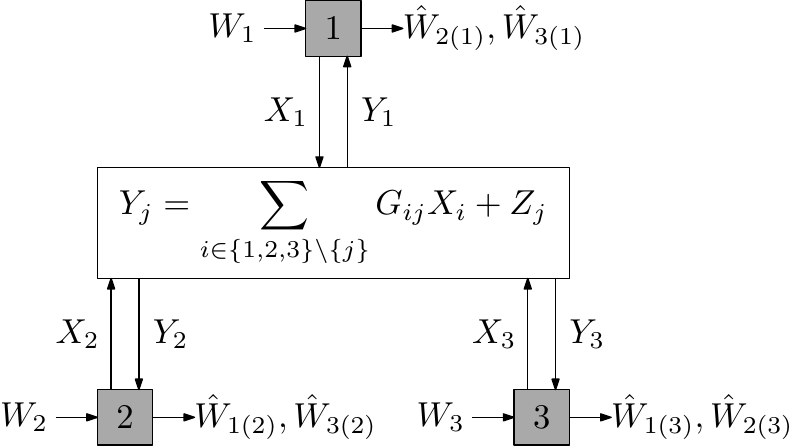}
\caption{The conferencing three-way channel}
\label{fig:three-way-channel}
\end{figure}

Fig.~\ref{fig:three-way-channel} depicts the three-way channel considered in this paper. The nodes in the three-way channel are denoted by nodes 1, 2, and 3. Further denote the message of node $i$, the channel input from node $i$, and the channel output received by node $i$ be $W_i$, $X_i$, and $Y_i$ respectively, for each $i \in \{1,2,3\}$. The three-way channel is defined as follows:
\begin{equation}
Y_j = \sum_{i \in \{1,2,3\} \setminus \{j\}} G_{ij} X_i + Z_j,
\end{equation}
for all $j \in \{1,2,3\}$, where $G_{ij}$ is the channel gain, and $Z_j$ is the receiver noise at node $j$, which is independent for each receiver $j$ and is independent and identically distributed for each channel use.

We consider two models of the three-way channel.
For the finite-field channel, we have the following:
\begin{enumerate}
\item The variables are elements from a finite field, i.e., $X_i,Y_i,Z_i,G_{ij} \in \{0,1,\dotsc, F-1\}$, where $F$ is a prime power. 
\item The addition and the multiplication are the modulo-$F$ addition and multiplication respectively.
\item $Z_j$ is an arbitrarily distributed random variable.
\item The channel gain $G_{ij}$ is non-zero, fixed, known to all nodes a priori.
\end{enumerate}

For the phase-fading AWGN channel, we have the following:
\begin{enumerate}
\item The variables $X_i,Y_i,Z_i, G_{ij}$ are complex numbers.
\item The addition and the multiplication are complex addition and multiplication respectively.
\item $Z_j$ is circularly symmetric complex Gaussian noise with variance $E[|Z_j|^2] = N_j$.
\item $G_{ij} = |G_{ij}| e^{\imath\theta_{ij}}$ where
\begin{enumerate}
\item the magnitude $|G_{ij}|$ is non-zero, fixed\footnote{Fixed attenuation models static networks where the path loss components are fixed.}, and known to all nodes a priori,
\item the phase shift $\theta_{ij}$ is a random variable\footnote{Random phase shifts prevent coherent combining at receivers. Coherent combining is practically difficult to achieve as it requires synchronizing the high-frequency carrier of two or more signals arriving at each receiver~\cite[p.~2021]{hostmadsenzhang05}.} that is independent and uniformly distributed over $[0,2\pi)$ for each $(i,j)$ pair and for each channel use.  In addition, $\theta_{ij}$ is known to only the receiver, i.e., node $j$ for all $i$. 
\end{enumerate}
\end{enumerate}

The channel is used multiple times, and we use the square brackets to indicate the random variables associated with each channel use, e.g., $X_1[t]$ denotes $X_1$ on the $t$-th channel use. 
We consider the $(n,R_1,R_2,R_3)$ block code consisting of
\begin{itemize}
\item an independent message for each user $i$: $W_i \in \{1,2,\dotsc, 2^{nR_i}\}$;
\item a set of encoding functions for each user $i$: $X_i[t]= f_{it}(W_i,Y'_t[1],Y'_i[2],\dotsc, Y'_i[t-1])$, where
\begin{equation*}
Y'_i[t] \triangleq \begin{cases}
Y_i[t], & \text{for the finite-field channel} \\
(Y_i[t], \theta_{ji}[t], \theta_{ki}[t]), & \text{for the phase-fading}\\
& \text{AWGN channel}\footnotemark[3],
\end{cases}
\end{equation*}
for $i \neq j \neq k$;\footnotetext[3] {This models the fact that each receiver knows the phase shifts.}\stepcounter{footnote}
\item  a decoding function for each user $i$: $(\hat{W}_{j(i)},\hat{W}_{k(i)}) = d_i(W_i,Y'_i[1],Y'_i[2], \dotsc,Y'_i[n])$, for $i \neq j \neq k$,  where $\hat{W}_{j(i)}$ and $\hat{W}_{k(i)}$ are the estimates of $W_j$ and $W_k$ by node $i$. 
\end{itemize}
Here, $n$ is the codelength, and $R_i$ is the rate at which node $i$ transmits.

For the AWGN channel, we impose an additional average transmitted power constraint on each node: $\sum_{t=1}^n E[|X_i[t]|^2]/n \leq P_i$, where the expectation is taken over the messages, the channel noise, and phase shifts. Such constraint is not imposed on the channel inputs for the case of finite field. We can then define the signal-to-noise ratio (SNR) for the link $i \rightarrow j$ as $\gamma_{ij} \triangleq |G_{ij}|P_i/N_j$. 

Assuming that each $W_i$ is uniformly distributed on $\{1,2,\dotsc, 2^{nR_i}\}$, we define the error probability as $P_\text{e} = \Pr \{ \hat{W}_{j(i)} \neq W_j \text{ for some } i,j \in \{1,2,3\} \text{ where }i \neq j\}$.
The rate triplet $(R_1,R_2,R_3)$ is said to be achievable if the following is true: For any $\epsilon > 0$, there exists for sufficiently large $n$ an $(n,R_1,R_2,R_3)$ code such that $P_\text{e} \leq \epsilon$. The capacity region is the closure of all achievable triplets.

\section{Outer Bounds to the Capacity Region}

We first derive capacity outer bounds.
Using the cut-set argument~\cite[Thm.~15.10.1]{coverthomas06}, if the rate triplet $(R_1,R_2,R_3)$ is achievable, there must exists some joint input distribution $p(x_1,x_2,x_3)$ such that
\begin{align}
R_i &\leq I(X_i; Y'_j,Y'_k|X_j,X_k) \label{eq:upper-2} \\
R_i + R_j &\leq I(X_i,X_j;Y'_k), \label{eq:upper}
\end{align}
for all $i \neq j \neq k$.

\subsection{For the Finite-Field Model}

Let $\mathcal{C}_\text{ff}$ denote the capacity region of the finite-field three-way channel.
The right-hand side of \eqref{eq:upper} is maximized simultaneously for all $i \neq j \neq k$ using independent and uniform distribution~\cite[Sec.~III]{ongmjohnsonit11} for $X_1$, $X_2$, and $X_3$. Evaluating the mutual information term in \eqref{eq:upper}, we have the following capacity outer bound:
\begin{lemma} \label{lemma:finite-field-outer}
$\mathcal{C}_\text{ff} \subseteq \mathcal{R}_1^\text{outer}$, where $\mathcal{R}_1^\text{outer}$ comprises all non-negative triplets $(R_1,R_2,R_3)$ satisfying
\begin{equation}
R_i + R_j \leq \log_2 F - H(Z_k),
\end{equation}
for all $i \neq j \neq k$.
\end{lemma}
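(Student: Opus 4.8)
The plan is to start from the cut-set bound \eqref{eq:upper}, which already guarantees that any achievable $(R_1,R_2,R_3)$ admits some joint input distribution $p(x_1,x_2,x_3)$ for which $R_i + R_j \leq I(X_i,X_j;Y'_k)$ holds for all $i \neq j \neq k$. For the finite-field model we have $Y'_k = Y_k = G_{ik}X_i + G_{jk}X_j + Z_k$, so the entire task reduces to showing that $I(X_i,X_j;Y_k) \leq \log_2 F - H(Z_k)$ for every input distribution; the claimed outer bound then follows immediately by substituting this inequality into the cut-set relation.

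To evaluate the mutual information I would expand it as $I(X_i,X_j;Y_k) = H(Y_k) - H(Y_k \mid X_i,X_j)$. The first key step is to argue that the conditional term equals $H(Z_k)$ exactly. Conditioned on $X_i = x_i$ and $X_j = x_j$, the output is $Y_k = (G_{ik}x_i + G_{jk}x_j) + Z_k$, i.e., the noise translated by a fixed field element; since modulo-$F$ addition by a constant is a bijection and $Z_k$ is independent of the inputs, entropy is preserved, giving $H(Y_k \mid X_i = x_i, X_j = x_j) = H(Z_k)$ for every pair $(x_i,x_j)$. Averaging over the inputs then yields $H(Y_k \mid X_i,X_j) = H(Z_k)$. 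The second step is the elementary cardinality bound $H(Y_k) \leq \log_2 F$, valid because $Y_k$ is supported on the $F$-element field. Combining the two gives $I(X_i,X_j;Y_k) \leq \log_2 F - H(Z_k)$.

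There is no serious obstacle here; the argument is essentially a direct entropy computation, and the only point requiring care is the justification that conditioning on the inputs leaves precisely $H(Z_k)$, which hinges on the translation-invariance of entropy under modulo-$F$ addition together with the independence of the receiver noise from the channel inputs. For completeness I would also note that this outer bound is in fact the best possible: choosing $X_1,X_2,X_3$ independent and uniform makes each $G_{ik}X_i$ uniform (as $G_{ik} \neq 0$ renders multiplication a bijection), hence $Y_k$ uniform with $H(Y_k) = \log_2 F$, so the inequality is met with equality simultaneously for all three bounds, consistent with the simultaneous-maximization property cited before the statement.
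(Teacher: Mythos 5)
Your proof is correct and follows essentially the same route as the paper: the cut-set bound \eqref{eq:upper} combined with the evaluation $I(X_i,X_j;Y_k) = H(Y_k) - H(Z_k) \leq \log_2 F - H(Z_k)$. The only difference is cosmetic --- where the paper cites \cite[Sec.~III]{ongmjohnsonit11} for the fact that independent uniform inputs simultaneously maximize the right-hand sides, you prove the resulting bound directly (and note tightness at the uniform distribution), which makes the argument self-contained but does not change its substance.
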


We will show that only constraint \eqref{eq:upper} is required for the characterization of $\mathcal{C}_\text{ff}$. 

\subsection{For the Phase-Fading AWGN Model}

Let $\mathcal{C}_\text{awgn}$ denote the capacity region of the phase-fading AWGN three-way channel.
As phase fading (known only to the receivers) inhibits coherent combining at the receivers, the right-hand side of both \eqref{eq:upper-2} and \eqref{eq:upper} is maximized simultaneously for all $i \neq j \neq k$ using independent, zero-mean Gaussian~\cite[Sec.~VII.D]{kramergastpar04}, \cite[Lemma~1]{hostmadsenzhang05} distribution $X_\ell$ with variance $E[|X_\ell|^2] = P_\ell$ for each $\ell \in \{ 1,2,3\}$. Evaluating the mutual information terms in \eqref{eq:upper-2} and \eqref{eq:upper}, we have the following capacity outer bound:
\begin{lemma} \label{lemma:awgn-outer}
$\mathcal{C}_\text{awgn} \subseteq \mathcal{R}_2^\text{outer}$, where $\mathcal{R}_2^\text{outer}$ comprises all non-negative triplets $(R_1,R_2,R_3)$ satisfying
\begin{align}
R_i &\leq \log_2 ( 1 + \gamma_{ij} + \gamma_{ik}) \label{eq:awgn-outer-2} \\
R_i + R_j & \leq \log_2 ( 1 + \gamma_{ik} + \gamma_{jk}), \label{eq:awgn-outer}
\end{align}
for all $i \neq j \neq k$.
\end{lemma}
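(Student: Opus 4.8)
The plan is to start from the cut-set constraints \eqref{eq:upper-2} and \eqref{eq:upper} and, for each of the two families of bounds, reduce the mutual-information term to the closed forms in \eqref{eq:awgn-outer-2} and \eqref{eq:awgn-outer}. Since the cut-set argument only asserts the existence of an input distribution $p(x_1,x_2,x_3)$ for which all constraints hold, it suffices to upper bound each mutual-information term separately over all admissible inputs; I therefore need not argue that a single distribution is simultaneously optimal for the purpose of the outer bound. The two steps are then (i) bound the ``broadcast'' term $I(X_i;Y'_j,Y'_k\mid X_j,X_k)$ and (ii) bound the ``multiple-access'' term $I(X_i,X_j;Y'_k)$, in each case exploiting that the phases are independent of the messages and uniform on $[0,2\pi)$.

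For \eqref{eq:upper-2}, I would condition on $X_j=x_j$ and $X_k=x_k$. Because receiver $j$ knows $\theta_{kj}$ and receiver $k$ knows $\theta_{jk}$ (these are components of $Y'_j$ and $Y'_k$), the known interference terms $G_{kj}x_k$ and $G_{jk}x_j$ can be removed, leaving a single-input two-output Gaussian channel driven by $X_i$: receiver $j$ effectively observes $G_{ij}X_i+Z_j$ and receiver $k$ observes $G_{ik}X_i+Z_k$. Subject to $E[|X_i|^2]\le P_i$, the conditional mutual information is maximized by a zero-mean Gaussian $X_i$, and maximum-ratio combining of the two derotated branches gives an effective SNR equal to the sum of the per-branch SNRs, i.e.\ $\gamma_{ij}+\gamma_{ik}$; hence $R_i\le\log_2(1+\gamma_{ij}+\gamma_{ik})$, which is \eqref{eq:awgn-outer-2}.

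For \eqref{eq:upper}, the term $I(X_i,X_j;Y'_k)$ is the sum-rate of a two-user multiple-access channel into receiver $k$, with $Y_k=G_{ik}X_i+G_{jk}X_j+Z_k$. Since the inputs are independent of the phases, I would write $I(X_i,X_j;Y'_k)=E\big[I(X_i,X_j;Y_k\mid\theta_{ik},\theta_{jk})\big]$, and for each phase realization bound the conditional sum-rate by $\log_2(1+S/N_k)$, where $S=E[|G_{ik}X_i+G_{jk}X_j|^2]$ is the received signal power, using the Gaussian maximum-entropy property. The quantity $S$ contains a cross term proportional to $\mathrm{Re}\big(|G_{ik}||G_{jk}|\,e^{\imath(\theta_{ik}-\theta_{jk})}\,E[X_i\overline{X_j}]\big)$; averaging over the uniform, mutually independent phases annihilates this term because $E[e^{\imath(\theta_{ik}-\theta_{jk})}]=0$, so by concavity of the logarithm the averaged bound is at most its value at zero effective cross-correlation, namely $\log_2(1+\gamma_{ik}+\gamma_{jk})$. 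This is precisely the statement that phase fading precludes coherent-combining gain, and it gives \eqref{eq:awgn-outer}. Running both arguments over all cyclic choices of $i\neq j\neq k$ yields the full description of $\mathcal{R}_2^\text{outer}$.

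I expect the main obstacle to be the phase-averaging step in the multiple-access bound, specifically justifying that the receiver's knowledge of $\theta_{ik}$ and $\theta_{jk}$ does not let it recover coherent-combining gain even when the encoders deliberately correlate $X_i$ and $X_j$. The delicate points are: (a) conditioning on the phases correctly, so that they enter only through the conditional channel and not as side information about the messages (valid because inputs and phases are independent); and (b) translating the block average power constraint $\sum_t E[|X_i[t]|^2]/n\le P_i$ into the per-letter second-moment bounds used in the single-letter Gaussian evaluation, for which the standard convexity/time-sharing argument applies. I would invoke the cited phase-fading maximum-entropy results~\cite{kramergastpar04,hostmadsenzhang05} to certify Gaussian optimality once the cross terms vanish.
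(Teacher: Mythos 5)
Your proposal is correct and follows essentially the same route as the paper: cut-set bounds \eqref{eq:upper-2} and \eqref{eq:upper}, evaluated by showing that receiver-side phase knowledge yields maximum-ratio combining for the single-source cut and that uniform independent phase shifts annihilate the input cross-correlation term for the multiple-access cut. The paper simply delegates these evaluations to the cited results of Kramer--Gastpar--Gupta and H{\o}st-Madsen--Zhang rather than spelling them out, so your write-up is an expanded but equivalent argument; your observation that each mutual-information term may be bounded separately (rather than insisting on a single simultaneously optimal input distribution) is a harmless and valid simplification for an outer bound.
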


\section{Capacity Region of the Finite-Field Model}

In this section, we prove the following capacity result:
\begin{theorem} \label{theorem:finite-field-capacity}
The capacity region of the finite-field three-way channel is $\mathcal{C}_\text{ff} = \mathcal{R}_1^\text{outer}$, where $\mathcal{R}_1^\text{outer}$ is defined in Lemma~\ref{lemma:finite-field-outer}.
\end{theorem}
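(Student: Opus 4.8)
Since Lemma~\ref{lemma:finite-field-outer} already supplies the outer bound $\mathcal{C}_\text{ff} \subseteq \mathcal{R}_1^\text{outer}$, the plan is to establish only the reverse inclusion $\mathcal{R}_1^\text{outer} \subseteq \mathcal{C}_\text{ff}$, i.e.\ that every non-negative triplet with $R_i + R_j \leq \log_2 F - H(Z_k)$ for all $i \neq j \neq k$ is achievable. I would do this with the non-cooperative, feedback-free scheme: each node $\ell$ draws its codeword symbols independently and uniformly over the field and never uses its received signal to form its channel input. The central structural fact is that node $k$'s output $Y_k = G_{ik} X_i + G_{jk} X_j + Z_k$ does not depend on $X_k$, so from the viewpoint of receiver $k$ the channel is simply a two-sender finite-field multiple-access channel (MAC) with inputs $X_i, X_j$ and independent noise $Z_k$, at which both $W_i$ and $W_j$ must be decoded. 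I would then invoke the standard MAC random-coding achievability, which holds verbatim for discrete channels with arbitrarily distributed noise.

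The crucial computation is to show that this MAC degenerates so that only its sum-rate bound is active. Because $G_{ik}, G_{jk} \neq 0$ in the field, multiplication by either is a bijection, so feeding independent uniform $X_i, X_j$ renders $Y_k$ uniform even after conditioning on either input alone; hence $H(Y_k) = H(Y_k \mid X_i) = H(Y_k \mid X_j) = \log_2 F$, while $H(Y_k \mid X_i, X_j) = H(Z_k)$. Consequently all three MAC rate bounds collapse to a single value,
\begin{equation*}
I(X_i; Y_k \mid X_j) = I(X_j; Y_k \mid X_i) = I(X_i, X_j; Y_k) = \log_2 F - H(Z_k),
\end{equation*}
so the MAC region at node $k$ is the full triangle $\{R_i, R_j \geq 0 : R_i + R_j \leq \log_2 F - H(Z_k)\}$, with the two individual constraints automatically implied. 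This is exactly why only constraint~\eqref{eq:upper} is needed and why the three per-receiver sum-rate bounds reproduce the defining inequalities of $\mathcal{R}_1^\text{outer}$.

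It remains to run all three decoders simultaneously. Each message $W_\ell$ is decoded at the two nodes other than $\ell$, yet there is a single codebook per node, so I would generate the three codebooks by independent uniform random coding and bound $P_\text{e}$ by the union of the three per-receiver MAC error events, each of which vanishes as $n \to \infty$ whenever the corresponding sum-rate constraint holds strictly; averaging then yields a fixed choice of codebooks achieving any interior point of $\mathcal{R}_1^\text{outer}$, and closure handles the boundary. The only delicate point is this simultaneity---that one codebook per node must serve both of its intended decoders at once---but since the random codebooks are i.i.d.\ uniform and each decoder's error analysis depends only on the marginals of the relevant two codebooks, the union bound dispatches it cleanly; I do not expect a substantive obstacle beyond this bookkeeping, the real work being concentrated in the collapse of the MAC bounds above.
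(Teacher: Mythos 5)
Your proposal is correct and follows essentially the same route as the paper: the non-cooperative i.i.d.\ uniform random-coding scheme, viewing each receiver as a two-user finite-field MAC, using the bijectivity of multiplication by the non-zero gains to collapse all three MAC constraints to the single sum-rate bound $\log_2 F - H(Z_k)$, and a union bound over the three decoders. The only difference is expository---you make the simultaneity/union-bound bookkeeping explicit where the paper simply says ``repeating the same decoding argument for the other two nodes.''
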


\begin{proof}[Proof of Theorem~\ref{theorem:finite-field-capacity}]
Lemma~\ref{lemma:finite-field-outer} gives an outer bound to $\mathcal{C}_\text{ff}$. So, we only need to show that all triplets in the interior of $\mathcal{R}_1^\text{outer}$ is achievable. We generate the codewords for each node as follows: For user $i$, we randomly generate $2^{nR_i}$ sequences $\boldsymbol{x}_i= (x_i[1], x_i[2], \dotsc, x_i[n])$ according to $p(\boldsymbol{x}_i) = \prod_{t=1}^np^\text{u}(x_i[t])$, where $p^\text{u}(\cdot)$ denotes the uniform distribution. We index each codeword by $\boldsymbol{x}_i(w_i)$, $w_i \in \{1,2,\dotsc, 2^{nR_i}\}$. Each user $i$ transmits $\boldsymbol{X}_i(W_i)$. Each user $k$ receives $\boldsymbol{Y}_k = (Y_k[1], Y_k[2], \dotsc, Y_k[n])$, where $Y_k[t] = G_{ik} X_i[t]+ G_{jk} X_j[t] + Z_k[t]$ is a multiple-access channel. It follows that node $k$ can reliably (with arbitrarily small error probability) decode $W_i$ and $W_j$ if~\cite[Thm.~15.3.4]{coverthomas06}
\begin{align}
R_i &< I(X_i;Y_k|X_j) = H(Y_k|X_j) - H(Y_k|X_i,X_j) \nonumber \\ &= H(G_{ik}X_i + Z_k) - H(Z_k) \nonumber \\ & =  \log_2 F - H(Z_k)  \label{eq:finite-field-lower-1} \\
R_j &< I(X_j;Y_k|X_i) = \log_2 F - H(Z_k)  \label{eq:finite-field-lower-2} \\
R_i + R_j &< I(X_i,X_j;Y_k) = \log_2 F - H(Z_k). \label{eq:finite-field-lower-3}
\end{align}
Note that there is a bijective mapping between $G_{ik} X_i[t]$ and $X_i[t]$ since $G_{ik}$ is not zero, and hence each $G_{ik}X_i[t]$ is also independently and uniformly distributed. Since \eqref{eq:finite-field-lower-3} implies \eqref{eq:finite-field-lower-1} and \eqref{eq:finite-field-lower-2}, repeating the same decoding argument for the other two nodes, we have that all triplets in the interior of $\mathcal{R}_1^\text{outer}$ is achievable. This proves Theorem~\ref{theorem:finite-field-capacity}.
\end{proof}

The capacity region for the finite-field three-way channel can be attained without cooperation, i.e., letting each node transmit as in an point-to-point channel, and without utilizing feedback, i.e., letting each node transmits a function of only its message (not it's received signals).

\section{Capacity Inner Bound for the Phase-Fading AWGN Model} \label{section:awgn}

Using the same coding strategy that achieves the capacity region of the finite-field channel, we have the following:
\begin{lemma} \label{lemma:awgn-lower}
The region $\mathcal{R}_1^\text{inner}$ is achievable on the phase-fading AWGN three-way channel, where $\mathcal{R}_1^\text{inner}$ comprises all non-negative triplets satisfying
\begin{align}
R_i &< \log_2 ( 1 + \gamma_{ik}) \label{eq:awgn-lower-1}\\
R_j &< \log_2 ( 1 + \gamma_{jk}) \label{eq:awgn-lower-2}\\
R_i + R_j &< \log ( 1 + \gamma_{ik} + \gamma_{jk}), \label{eq:awgn-lower-3}
\end{align}
for all $i \neq j \neq k$.
\end{lemma}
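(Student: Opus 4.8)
Lemma 4 claims the same non-cooperative, no-feedback random-coding scheme that achieved capacity for the finite-field model also achieves the region $\mathcal{R}_1^\text{inner}$ on the phase-fading AWGN channel. My plan is to reuse the achievability argument of Theorem 1 essentially verbatim, replacing the finite-field multiple-access channel analysis with the standard Gaussian multiple-access channel (MAC) analysis. The key observation is that from the perspective of any single receiver $k$, the channel $Y_k = G_{ik}X_i + G_{jk}X_j + Z_k$ is a two-user Gaussian MAC, where $i,j$ are the two senders other than $k$. Since the scheme ignores feedback, each node transmits a function of its message alone, so node $k$'s received sequence is genuinely a memoryless two-user MAC output, and the standard MAC achievability region applies directly.

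**Key steps.** First I would fix a receiver $k$ and its two senders $i,j$, and let each sender use independent zero-mean circularly-symmetric Gaussian codebooks with per-symbol powers $P_i$ and $P_j$, matching the input distribution already used for the outer bound in Lemma 3. Second I would invoke the two-user Gaussian MAC capacity region (Cover--Thomas, as cited for the finite-field case): node $k$ can reliably decode both $W_i$ and $W_j$ provided $R_i < I(X_i;Y_k|X_j)$, $R_j < I(X_j;Y_k|X_i)$, and $R_i+R_j < I(X_i,X_j;Y_k)$. Third I would evaluate these three mutual informations. Because node $k$ knows the phase shifts $\theta_{ik},\theta_{jk}$ (this is exactly what $Y'_k$ encodes), it can compensate the phases, so each term reduces to the standard Gaussian single- and two-user expressions: $I(X_i;Y_k|X_j)=\log_2(1+\gamma_{ik})$, $I(X_j;Y_k|X_i)=\log_2(1+\gamma_{jk})$, and $I(X_i,X_j;Y_k)=\log_2(1+\gamma_{ik}+\gamma_{jk})$, where $\gamma_{ik}=|G_{ik}|P_i/N_k$. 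These reproduce \eqref{eq:awgn-lower-1}--\eqref{eq:awgn-lower-3} for the one receiver $k$. Finally, I would note that by symmetry the identical argument applied to each of the three receivers yields the three copies of these bounds indexed over all $i\neq j\neq k$, and that a single codebook for each node simultaneously satisfies all receivers' requirements since each node's codebook is used by all who decode it; intersecting the three MAC regions gives exactly $\mathcal{R}_1^\text{inner}$.

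**Main obstacle.** The genuine subtlety is not the MAC decoding but the effect of the random, receiver-known phase fading on the mutual-information evaluation. I need to argue carefully that the phase shifts $\theta_{ik},\theta_{jk}$, being known to receiver $k$ (encoded in $Y'_k$), do not degrade the achievable rates relative to the fixed-gain Gaussian MAC: conditioned on the phases, the channel is a standard complex Gaussian MAC with effective gains of magnitude $|G_{ik}|,|G_{jk}|$, and since the phase only rotates the received constellation it preserves the mutual information. This is precisely the point where one appeals to the cited results on phase-fading channels (e.g. \cite[Lemma~1]{hostmadsenzhang05}), and it is also the reason the inner-bound rates here lack the coherent-combining cross terms one might naively expect. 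I would flag that this scheme does not exploit any cooperation or feedback, and in general $\mathcal{R}_1^\text{inner}$ will sit strictly inside the outer bound $\mathcal{R}_2^\text{outer}$ of Lemma 3 — in particular \eqref{eq:awgn-lower-1}--\eqref{eq:awgn-lower-2} are weaker than the corresponding single-rate outer constraint \eqref{eq:awgn-outer-2}, which is exactly what motivates the later feedback scheme.
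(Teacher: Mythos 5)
Your proposal is correct and follows essentially the same route as the paper: independent circularly-symmetric complex Gaussian codebooks at each node, treating each receiver as a two-user Gaussian MAC, and invoking the standard MAC achievability result together with the phase-fading evaluation (phases known at the receiver) to obtain \eqref{eq:awgn-lower-1}--\eqref{eq:awgn-lower-3}. The paper's proof is just a terser version of yours, citing the same two results (the MAC coding theorem and the phase-fading Gaussian channel lemma).
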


\begin{proof}[Proof of Lemma~\ref{lemma:awgn-lower}]
We randomly generate the codewords $\boldsymbol{x}_i(w_i)$ for each $i \in \{1,2,3\}$ and for each $w_i  \in \{1,2,\dotsc,2^{nR_i}\}$, where each codeletter $x_i[t](w_i)$ is independently generated according to the circularly symmetric complex Gaussian distribution with variance $E[|X_i|^2] = P_i$. Lemma~\ref{lemma:awgn-lower} follows from the achievability results of the multiple-access channel~\cite[Thm.~15.3.4]{coverthomas06} and the phase-fading complex Gaussian channel~\cite[Lemma~1]{hostmadsenzhang05}.
\end{proof}

It follows from the above lemma that the capacity region of the phase-fading AWGN three-way channel is inner bounded as $\mathsf{cl}(\mathcal{R}_1^\text{inner}) \subseteq \mathcal{C}_\text{awgn} $, where $\mathsf{cl}(\cdot)$ denotes the closure operator.
We note that \eqref{eq:awgn-lower-1} and \eqref{eq:awgn-lower-2} are both stricter than \eqref{eq:awgn-outer-2}. Hence, $\mathsf{cl}(\mathcal{R}_1^\text{inner})$ might not equal $\mathcal{R}_2^\text{outer}$. Unlike the finite-field channel where the non-cooperative scheme achieves the capacity region, this scheme may not be optimal for the phase-fading AWGN channel.

\begin{remark}
We have used independent circularly symmetric complex Gaussian codewords~\cite{telatar99} here to maximize the mutual information terms $I(X_i;Y_k|X_j)$ and $I(X_i,X_j;Y_k)$ to obtain \eqref{eq:awgn-lower-1}--\eqref{eq:awgn-lower-3}. However, as the transmitted signals go through random phase shifts, the same results \eqref{eq:awgn-lower-1}--\eqref{eq:awgn-lower-3} can also be obtained using independent real Gaussian codewords with the same variance.
\end{remark}

\section{Equal-Rate Capacity of the Phase-Fading AWGN Model}

In this section, we simplify the problem by considering the equal-rate points, i.e., where $R_1 = R_2 = R_3$. An equal rate $R$ is said to be achievable if and only if $(R,R,R)$ is achievable. This scenario guarantees fairness of the nodes by ensuring that all nodes are able to transmit at the same rate. In this setting, we define the {\em equal-rate capacity} as the supremum of the set of achievable equal rates. The equal-rate capacity is the maximum rate at which all nodes can transmit simultaneously.

We now derive the equal-rate capacity for two classes of phase-fading AWGN three-way channels.





\subsection{The Sender-Symmetrical Case}

Consider the first class of phase-fading AWGN channels where the SNR on the link $i \rightarrow k$ equals that on $j \rightarrow k$, i.e., $\gamma_{ik} = \gamma_{jk} \triangleq \gamma_k$ for each $k$. We term this the {\em sender-symmetrical case} as each receiver sees an equal effective transmitted power (after channel attenuation) from the other two transmitters. Note that $\gamma_j$ might not equal $\gamma_k$. For this class of three-way channel, we have the following:
\begin{theorem} \label{theorem:sender-symmetrical}
The equal-rate capacity of the sender-symmetrical phase-fading AWGN three-way channel is 
\begin{equation}
C_\text{ss} = \frac{1}{2} \log_2 \left( 1 + 2 \min_{k \in \{1,2,3\}}\gamma_k \right). \label{eq:sender-symmetrical-capacity}
\end{equation}
\end{theorem}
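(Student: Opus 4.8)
The plan is to prove the two inclusions separately and show they coincide at $C_\text{ss}$, specializing the two bounds already in hand---the outer bound $\mathcal{R}_2^\text{outer}$ of Lemma~\ref{lemma:awgn-outer} and the non-cooperative inner bound $\mathcal{R}_1^\text{inner}$ of Lemma~\ref{lemma:awgn-lower}---to the sender-symmetrical regime $\gamma_{ik}=\gamma_{jk}=\gamma_k$ and the equal-rate point $R_1=R_2=R_3=R$. No new coding scheme is needed; the content is in checking that the two bounds meet.

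For the converse, I would use only the sum-rate constraints \eqref{eq:awgn-outer}. Under sender symmetry each of the three becomes $R_i+R_j\leq\log_2(1+2\gamma_k)$, one for each choice of $k$. Imposing $R_1=R_2=R_3=R$ turns these into $2R\leq\log_2(1+2\gamma_k)$ for all three $k$, and taking the tightest (smallest $\gamma_k$) gives $R\leq\frac{1}{2}\log_2(1+2\min_k\gamma_k)=C_\text{ss}$. This direction is immediate from the cut-set bound.

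For achievability, I would specialize $\mathcal{R}_1^\text{inner}$. At receiver $k$ the constraints \eqref{eq:awgn-lower-1}--\eqref{eq:awgn-lower-3} collapse, after setting $R_1=R_2=R_3=R$, to the individual bound $R<\log_2(1+\gamma_k)$ and the sum bound $R<\frac{1}{2}\log_2(1+2\gamma_k)$. The subtle point---and what I expect to be the only real obstacle---is that the individual-rate constraints of the \emph{inner} bound are strictly tighter than those of the outer bound, so it is not a priori clear the two bounds meet. The hard part is therefore to verify that the individual constraints are never active at the equal-rate point, i.e. that $\frac{1}{2}\log_2(1+2\gamma_k)\leq\log_2(1+\gamma_k)$ for every $k$. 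This rearranges to $1+2\gamma_k\leq(1+\gamma_k)^2$, equivalently $\gamma_k^2\geq0$, which always holds.

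With that inequality in place, every individual bound satisfies $\log_2(1+\gamma_k)\geq\frac{1}{2}\log_2(1+2\gamma_k)\geq\frac{1}{2}\log_2(1+2\min_k\gamma_k)$, so the individual constraints are slack and the achievable equal rate is governed entirely by the sum constraints, giving $R<\frac{1}{2}\log_2(1+2\min_k\gamma_k)$. Taking the closure makes every equal rate below $C_\text{ss}$ achievable, matching the converse, and I would conclude $C_\text{ss}=\frac{1}{2}\log_2(1+2\min_k\gamma_k)$.
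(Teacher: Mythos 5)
Your proposal is correct and follows essentially the same route as the paper: the converse specializes the sum-rate cut-set bounds of Lemma~\ref{lemma:awgn-outer} to the equal-rate, sender-symmetrical setting, and achievability specializes Lemma~\ref{lemma:awgn-lower} and observes that the sum constraint $R<\frac{1}{2}\log_2(1+2\gamma_k)$ implies the individual constraint $R<\log_2(1+\gamma_k)$ (the paper states this implication without the algebra; you verify it via $1+2\gamma_k\leq(1+\gamma_k)^2$). No substantive difference.
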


\begin{proof}[Proof of Theorem~\ref{theorem:sender-symmetrical}]
From Lemma~\ref{lemma:awgn-outer}, we know that if the equal rate $R$ is achievable, then $R \leq \log_2 ( 1 + \gamma_j + \gamma_k)$ and $R \leq \frac{1}{2}\log_2 ( 1 + 2 \gamma_k)$ for all $j, k \in \{1,2,3\}$ where $j \neq k$. Hence, the equal-rate capacity is upper bounded as $C_\text{ss} \leq \frac{1}{2} \log_2 \left( 1 + 2 \min_{k \in \{1,2,3\}}\gamma_k \right)$.

\begin{table*}
\renewcommand{\arraystretch}{1.5}
\caption{A Cooperative Coding Scheme using Feedback}
\label{table}
\centering
\begin{tabular}{c||c c c c c c}
\bfseries Block & 1 & 2 & 3 & $\dotsm$ & $B$ & $B+1$ \\
\hline
Node 2 transmits & $\boldsymbol{X}_2(W_2^{(1)})$ & $\boldsymbol{X}_2(W_2^{(2)})$ & $\boldsymbol{X}_2(W_2^{(3)})$ & $\dotsm$ & $\boldsymbol{X}_2(W_2^{(B)})$ & $\boldsymbol{X}_2(0)$\\
Node 3 transmits & $\boldsymbol{X}_3(W_3^{(1)})$ & $\boldsymbol{X}_3(W_3^{(2)})$ & $\boldsymbol{X}_3(W_3^{(3)})$ & $\dotsm$ & $\boldsymbol{X}_3(W_3^{(B)})$ & $\boldsymbol{X}_2(0)$\\
Node 1 decodes & $\hat{W}_2^{(1)},\hat{W}_3^{(1)} \;\rightarrow$ & $\hat{W}_2^{(2)},\hat{W}_3^{(2)} \;\rightarrow$ & $\hat{W}_2^{(3)},\hat{W}_3^{(3)} \;\rightarrow$ & $\dotsm $ & $\hat{W}_2^{(B)},\hat{W}_3^{(B)}$ & --- \\
Node 1 transmits & $\boldsymbol{X}_1(0,0)$ & $\boldsymbol{X}_1(W_1^{(1)}, W_{2 \oplus 3}^{(1)})$ & $\boldsymbol{X}_1(W_1^{(2)}, W_{2 \oplus 3}^{(2)})$ & $\dotsm$ & $\boldsymbol{X}_1(W_1^{(B-1)}, W_{2 \oplus 3}^{(B-1)})$&  $\boldsymbol{X}_1(W_1^{(B)}, W_{2 \oplus 3}^{(B)})$ \\
Node 2 decodes &  $\hat{W}_1^{(1)},\hat{W}_3^{(1)}$ & $\leftarrow \; \hat{W}_1^{(2)},\hat{W}_3^{(2)}$ & $\leftarrow \; \hat{W}_1^{(3)},\hat{W}_3^{(3)}$ & $\dotsm$  & $\leftarrow \; \hat{W}_1^{(B)},\hat{W}_3^{(B)}$ & --- \\
Node 3 decodes &  $\hat{W}_1^{(1)},\hat{W}_2^{(1)}$ & $\leftarrow \; \hat{W}_1^{(2)},\hat{W}_2^{(2)}$ & $\leftarrow \; \hat{W}_1^{(3)},\hat{W}_2^{(3)}$ & $\dotsm$  & $\leftarrow \; \hat{W}_1^{(B)},\hat{W}_2^{(B)}$ & ---
\end{tabular}\\
\vspace{2ex}
\begin{flushleft}
Note: For simplicity the index denoting the node that decodes the messages are omitted. The arrows show the decoding order.
\end{flushleft}
\hrulefill
\end{table*}

From Lemma~\ref{lemma:awgn-lower}, we know that the equal-rate $R$ is achievable if $R < \log_2 ( 1 + \gamma_k)$ and $R < \frac{1}{2}\log_2 ( 1 + 2 \gamma_k)$, for all $k \in \{1,2,3\}$. Note that the second inequality implies the first inequality. Hence, if $R < \frac{1}{2} \log_2 \left( 1 + 2 \min_{k \in \{1,2,3\}}\gamma_k \right)$, then the equal rate $R$ is achievable. Taking the supremum of all $R$ yields the capacity~\eqref{eq:sender-symmetrical-capacity}.
\end{proof}

Note that the equal-rate capacity of the sender-symmetrical phase-fading AWGN three-way channel is achieved by the non-cooperative coding scheme, where feedback is not utilized.

\subsection{The Reciprocal Case}

Consider another class of phase-fading AWGN channels where the SNR on the link $i \rightarrow j$ equals that on the link $j \rightarrow i$, i.e., $\gamma_{ij} = \gamma_{ji}$ for all $i,j \in \{1,2,3\}$ where $i \neq j$. In this case, each receiver sees possibly different SNRs from different nodes (i.e., $\gamma_{ik}$ might not equal $\gamma_{jk}$), but for a pair of nodes, the SNR of the forward link is the same as that of the backward link. We term these channels the {\em reciprocal} phase-fading AWGN three-way channels.

Unlike the sender-symmetrical case, the non-cooperative coding scheme derived in Sec.~\ref{section:awgn} does not always achieve the equal-rate capacity upper bound for the reciprocal case. For example, let $\gamma_{23} = \gamma_{32} = 1$, $\gamma_{12} = \gamma_{21} = 6$, and $\gamma_{13} = \gamma_{31} = 8$. From Lemma~\ref{lemma:awgn-outer}, the equal-rate capacity is upper bounded as $C \leq \frac{1}{2} \log_2 8 = 1.5$. From Lemma~\ref{lemma:awgn-lower}, the non-cooperative scheme achieves rates up to $R < \log_2 2 = 1$.

We now propose a coding scheme that utilizes feedback via routing and network coding, and show that it achieves the equal-rate capacity upper bound (and hence the equal-rate capacity).
Without loss of generality, let $\gamma_{23} \leq \gamma_{12} \leq \gamma_{13}$. We have the following:

\begin{theorem} \label{theorem:equal-rate-reciprocal}
The equal-rate capacity of the reciprocal phase-fading AWGN three-way channel where $\gamma_{23} \leq \gamma_{12} \leq \gamma_{13}$ is 
\begin{equation}
C_\text{r} =  \frac{1}{2} \log \left( 1 + \gamma_{12} + \gamma_{23} \right).
\end{equation}
\end{theorem}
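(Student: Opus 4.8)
The plan is to prove matching converse and achievability. For the converse I would specialize Lemma~\ref{lemma:awgn-outer} to the equal-rate case: the sum-rate bound \eqref{eq:awgn-outer} with $R_i=R_j=R$ gives $2R\le\log_2(1+\gamma_{ik}+\gamma_{jk})$ for every choice of receiving node $k$. Using reciprocity $\gamma_{ij}=\gamma_{ji}$, the three cut values are $1+\gamma_{12}+\gamma_{13}$ (node~1), $1+\gamma_{12}+\gamma_{23}$ (node~2), and $1+\gamma_{13}+\gamma_{23}$ (node~3). Under the ordering $\gamma_{23}\le\gamma_{12}\le\gamma_{13}$ the node-2 cut is the smallest, so $R\le\tfrac12\log_2(1+\gamma_{12}+\gamma_{23})=C_\text{r}$.

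For achievability I would analyze the block-Markov scheme of Table~\ref{table} over $B+1$ blocks, so the rate penalty $\tfrac{B}{B+1}$ vanishes as $B\to\infty$. Node~1 acts as the relay: in each block it decodes $W_2^{(b)},W_3^{(b)}$ from the multiple-access channel formed by nodes~2 and~3; since $\gamma_{21}=\gamma_{12}$ and $\gamma_{31}=\gamma_{13}$ are the two larger SNRs, the MAC constraints $R<\log_2(1+\gamma_{12})$ and $2R<\log_2(1+\gamma_{12}+\gamma_{13})$ are looser than $C_\text{r}$, so this succeeds. Node~1 then forms the network-coded symbol $W_{2\oplus3}^{(b)}=W_2^{(b)}\oplus W_3^{(b)}$ and in the next block transmits the rate-$2R$ codeword $\boldsymbol{X}_1(W_1^{(b)},W_{2\oplus3}^{(b)})$.

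The crux is the decoding at nodes~2 and~3, which I would perform by \emph{backward} decoding (matching the left-pointing arrows in Table~\ref{table}). Take node~2, the bottleneck. Having decoded all later blocks, node~2 recovers $(W_1^{(b)},W_3^{(b)})$ from two observations: (a)~its block-$(b{+}1)$ signal, in which node~3's now-known symbol $W_3^{(b+1)}$ is cancelled, leaving a clean look at $\boldsymbol{X}_1(W_1^{(b)},W_{2\oplus3}^{(b)})$ at SNR $\gamma_{12}$; and (b)~its block-$b$ signal, carrying node~3's direct transmission $\boldsymbol{X}_3(W_3^{(b)})$ but corrupted by node~1's not-yet-decoded block-$(b{-}1)$ signal, which when treated as noise leaves the direct link at effective SINR $\gamma_{23}/(1+\gamma_{12})$. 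Because node~2 knows its own $W_2^{(b)}$, the relayed symbol $W_{2\oplus3}^{(b)}$ and the direct symbol $W_3^{(b)}$ encode the \emph{same} unknown, so both observations constrain $W_3^{(b)}$ while only~(a) constrains $W_1^{(b)}$. A joint-typicality error analysis then gives, from the ``both-messages-wrong'' event, the binding constraint $2R<\log_2(1+\gamma_{12})+\log_2\!\big(1+\gamma_{23}/(1+\gamma_{12})\big)$; the two logarithms telescope since $(1+\gamma_{12})\big(1+\gamma_{23}/(1+\gamma_{12})\big)=1+\gamma_{12}+\gamma_{23}$, yielding exactly $R<\tfrac12\log_2(1+\gamma_{12}+\gamma_{23})=C_\text{r}$. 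The single-message error events give only the looser bounds $R<\log_2(1+\gamma_{12})$ and $R<\log_2(1+\gamma_{12}+\gamma_{23})$, both implied by $C_\text{r}$ under the SNR ordering. The symmetric argument at node~3 replaces $\gamma_{12}$ by $\gamma_{13}$ and produces the weaker constraint $R<\tfrac12\log_2(1+\gamma_{13}+\gamma_{23})$, confirming node~2 as the bottleneck.

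I expect the main obstacle to be precisely this combining step: phase fading rules out coherent combining, and the direct and relayed copies of $W_3^{(b)}$ live in \emph{different} blocks with block-dependent interference. Naive forward decoding forces $W_3^{(b)}$ to be recovered from the weak direct link alone, costing $R<\log_2(1+\gamma_{23})$, and treating node~1 as noise in the wrong block likewise falls short of $C_\text{r}$. The backward schedule is what renders the future block clean while turning the current-block interference into exactly the factor $1+\gamma_{12}$ that the telescoping absorbs; getting this bookkeeping right---including the boundary blocks~$1$ and $B{+}1$, where one interfering or direct term is absent---is the delicate part, after which $\tfrac{B}{B+1}\to1$ closes the gap to the converse.
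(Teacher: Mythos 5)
Your proposal is correct and follows essentially the same route as the paper: the cut-set bound at node 2 for the converse, and for achievability the same block-Markov scheme with node 1 decode-and-forwarding the network-coded message $W_{2\oplus 3}$ via doubly-indexed codewords, backward joint-typicality decoding at nodes 2 and 3 over two consecutive blocks, and the same identification of the sum-rate event as the binding constraint with the telescoping $\log(1+\gamma_{12})+\log\bigl(1+\gamma_{23}/(1+\gamma_{12})\bigr)=\log(1+\gamma_{12}+\gamma_{23})$. The paper handles the boundary blocks with dummy messages $W_i^{(0)}=W_i^{(B+1)}=0$, exactly the bookkeeping you anticipate.
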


\begin{proof}[Proof of Thereom~\ref{theorem:equal-rate-reciprocal}]
It follows directly from Lemma~\ref{lemma:awgn-outer} that the equal-rate capacity for the reciprocal phase-fading AWGN three-way channel is upper bounded as $C_\text{r} \leq \frac{1}{2} \log \left( 1 + \gamma_{12} + \gamma_{23} \right)$.

We now propose a coding scheme that achieves this equal rate. Consider $B$ uniformly-distributed messages of equal size for each node, $W_i^{(b)} \in \{0,1,\dotsc, 2^{nR}-1\}$, for all $b \in \{1,2,\dotsc, B\}$ and for all $i \in \{1,2,3\}$. Each node sends its $B$ messages in $(B+1)$ blocks of $n$ channel uses. If the probability that any node wrongly decodes any message can be made arbitrarily small, the achievable equal rate $BR/(B+1) \rightarrow R$ as $B \rightarrow \infty$. Let $W_i(0)=W_i{(B+1)} =0$ be dummy messages, for all $i$. 

The encoding scheme is as follows:
\begin{enumerate}
\item In block $b \in \{1,2,\dotsc, B+1\}$: Node 2 transmits the codeword $\boldsymbol{X}_2(W_2^{(b)})$, where each codeletter is independently generated according to the circularly symmetric complex Gaussian distribution with variance $E[|X_2[t]|^2] = P$.
\item In block $b \in \{1,2,\dotsc, B+1\}$: Node 3 transmits the codeword $\boldsymbol{X}_3(W_3^{(b)})$,  where each codeletter is independently generated according to the circularly symmetric complex Gaussian distribution with variance $E[|X_3[t]|^2] = P$.
\item In block $b \in \{1,2,\dotsc, B+1\}$: Assuming that node 1 has decoded $W_2^{(b-1)}$  and $W_3^{(b-1)}$, it transmits the codeword $\boldsymbol{X}_1(W_1^{(b-1)}, W_{2\oplus 3}^{(b-1)})$,  where each codeletter is independently generated according to the circularly symmetric complex Gaussian distribution with variance $E[|X_1[t]|^2] = P$, and  $W_{2 \oplus 3}^{(b-1)} \triangleq W_2^{(b-1)} + W_3^{(b-1)} \mod 2^{nR}$ is the modulo-sum of the other nodes' messages in the previous block.
\end{enumerate}
The codewords for each blocks are generated independently. Note that there are all together $2^{nR}$ codewords for each of nodes 2 and 3 for each block. However, as node 1 uses a doubly-indexed codewords, there are $2^{2nR}$ codewords for each block. Also note that node 1 delays in transmitting its own message by one block.

The encoding and decoding process is depicted in Table~\ref{table}.

Node 1 decodes $W_2^{(b)}$ and $W_3^{(b)}$ at the end of block $b$. Using the results of the multiple-access channel, node 1 can reliably decodes its intended messages if
\begin{align}
R &< \log ( 1 + \gamma_{12}) \label{eq:reciprocal-node-1a} \\
R &< \log ( 1 + \gamma_{13})  \label{eq:reciprocal-node-1b}\\
2R &< \log ( 1 + \gamma_{12} + \gamma_{13} ). \label{eq:reciprocal-node-1c}
\end{align}
This sequence of decoding is necessary for node 1 as it transmits $W_{2 \oplus 3}^{(b)}$ in block $(b+1)$.

Node 2 employs simultaneous, {\em backward} decoding, meaning that it decodes the message in the reverse order after the entire $(B+1)$ blocks of transmissions~\cite[Ch.~7]{willems82}. More specifically, it decodes $W_1^{(b)}$ and $W_3^{(b)}$ starting from $b=B$, followed by $b=B-1$, and so on until $b=1$. It does so using (i) its received signals in blocks $b$ and $(b+1)$, (ii) its previously decoded messages, and (iii) its own messages, 

Assuming that it has correctly decoded $\{w_1^{(a)}, w_3^{(a)}\}_{a=b+1}^B$, and knowing its own messages $\{w_2^{(a)}\}_{a=1}^B$, it declares that $w_1^{(b)}=p$ and $w_3^{(b)}=q$ were sent if it finds a unique pair $(p,q)$ such that
\begin{multline}
\left( \boldsymbol{x}_1(p, \underline{w_2^{(b)}}+ q \text{ mod } 2^{nR}), \boldsymbol{x}_3(\underline{w_3^{(b+1)}}), \boldsymbol{y}_2'^{(b+1)} \right)\\ \in \mathcal{A}_\eta(X_1,X_3,Y'_2),  \label{eq:typical-1}
\end{multline}
and
\begin{equation}
\left( \boldsymbol{x}_3(q), \boldsymbol{y}_2'^{(b)} \right)\in \mathcal{A}_\eta(X_3,Y'_2), \label{eq:typical-2}
\end{equation}
where $\boldsymbol{y}_2'^{(b)}$ contains the received symbols $y_2[t]$, as well as the phase shifts values $(\theta_{12}[t],\theta_{32}[t])$, in block $b$, and $\mathcal{A}_\eta(\cdot)$ is the set of jointly typical sequences~\cite[p.~521]{coverthomas06}. Here we have underlined the messages that node 2 has already decoded or knows a priori when decoding $(w_1^{(b)},w_3^{(b)})$.

Assume that $w_1^{(b)}=p$ and $w_3^{(b)}=q$ are the actual messages sent, and let $p' \in \{1,2,\dotsc, 2^{nR}\} \setminus \{p\}$ and $q' \in \{1,2,\dotsc, 2^{nR}\} \setminus \{q\}$ be some wrong messages.
Node 2 makes a decoding error if any of the following event happens:\\
\indent \indent  {[}E1{]} $(p,q)$ does not satisfy \eqref{eq:typical-1} and \eqref{eq:typical-2}.\\
\indent \indent  {[}E2{]} some $(p',q)$ satisfies \eqref{eq:typical-1} and \eqref{eq:typical-2}.\\
\indent \indent   {[}E3{]} some $(p,q')$ satisfies \eqref{eq:typical-1} and \eqref{eq:typical-2}.\\
\indent \indent   {[}E4{]} some $(p',q')$ satisfies \eqref{eq:typical-1} and \eqref{eq:typical-2}. 

It follows from the joint asymptotic equipartition property~\cite[Thm.\ 15.2.1]{coverthomas06} that $\Pr\{\text{E1}\} < \eta$.

Now,
\begin{subequations}
\begin{align}
\Pr\{\text{E2}\} &\leq \Pr\{ \text{some } (p',q) \text{ satisfies } \eqref{eq:typical-1} \} \label{eq:e2-1} \\
& \leq \sum_{p' \in \{1,2,\dotsc, 2^{nR}\} \setminus \{p\}} 2^{-n[I(X_1;X_3,Y'_2) - 6\eta]} \label{eq:e2-2} \\
&= (2^{nR}-1) - 2^{-n[I(X_1;Y_2|X_3,\theta_{12}, \theta_{32}) - 6\eta]} \label{eq:e2-3} \\
&< 2^{n[ R - I(X_1;Y_2|X_3,\theta_{12}, \theta_{32}) + 6\eta]},\label{eq:e2-4}
\end{align}
\end{subequations}
where \eqref{eq:e2-1} follows from the union bound of probability, \eqref{eq:e2-2} follows from a property of joint typicality~\cite[Thm.~15.2.3]{coverthomas06}, \eqref{eq:e2-3} is derived because $X_1$ and $(X_3,\theta_{12}, \theta_{32})$ are independent.

Also, 
\begin{subequations}
\begin{align}
\Pr\{\text{E3}\} &= \sum_{q' \in \{1,2,\dotsc, 2^{nR}\} \setminus \{q\}} \Big[\Pr\{ (p,q') \text{ satisfies } \eqref{eq:typical-1}\} \nonumber \\
&\quad\quad\quad\quad\quad\quad\quad\quad \times \Pr\{q' \text{ satisfies } \eqref{eq:typical-2}\} \Big] \label{eq:e3-1}\\
&\leq (2^{nR}-1) 2^{-n[I(X_1;X_3,Y'_2)-6\eta]} 2^{-n[I(X_3;Y'_2)-6\eta]} \\
&< 2^{n [ R - I(X_1;Y_2|X_3,\theta_{12}, \theta_{32}) - I(X_3;Y_2|\theta_{12}, \theta_{32}) - 12 \eta]} \label{eq:e3-2} \\
& = 2^{n[R - I(X_1,X_3;Y_2|\theta_{12}, \theta_{32}) - 12\eta]},\label{eq:e3}
\end{align}
\end{subequations}
where \eqref{eq:e3-1} is derived because the codewords are generated independently in different blocks, \eqref{eq:e3-2} is derived because $X_1$ and $(X_3,\theta_{12}, \theta_{32})$ are independent, and so are $X_3$ and $(\theta_{12}, \theta_{32})$.

Using the same arguments, we have that
\begin{subequations}
\begin{align}
\Pr\{\text{E4}\} &= \sum_{p'} \sum_{q'} \Big[ \Pr \{ (p',q') \text{ satisfies }  \eqref{eq:typical-1} \} \nonumber \\
&\quad\quad\quad\quad\quad \times \Pr\{ q' \text{ satisfies } \eqref{eq:typical-2} \} \Big] \\
&= (2^{nR}-1) (2^{nR}-1) 2^{-n[I(X_1;X_3,Y'_2)-6\eta]} \nonumber \\
&\quad \times 2^{-n[I(X_3;Y'_2)-6\eta]} \\
&< 2^{n[2R - I(X_1,X_3;Y_2|\theta_{12}, \theta_{32}) - 12\eta]}.
\end{align}
\end{subequations}

So by choosing a sufficiently large $n$ and a sufficiently small $\eta$, if
\begin{align}
R < &\min \{ I(X_1;Y_2|X_3,\theta_{12}, \theta_{32}), \frac{1}{2}I(X_1,X_3;Y_2|\theta_{12}, \theta_{32})\} \nonumber \\
& = \min \{ \log ( 1 + \gamma_{12}) , \frac{1}{2} \log( 1 + \gamma_{12} + \gamma_{23} ) \}, \label{eq:reciprocal-node-2}
\end{align}
then $\Pr\{\text{E1, E2, E3, or E4}\} \leq \bigcup_{i=1}^4 \Pr\{\text{E}i\}$ can be made arbitrarily small, meaning that node 2 can reliably decode $(p,q)$.

Repeating the same argument for node 3, we can show that node 3 can reliably decode $W_1^{(b)}$ and $W_2^{(b)}$ if
\begin{align}
R &< \min \{ I(X_1;Y_3|X_2,\theta_{13}, \theta_{23}), \frac{1}{2}I(X_1,X_2;Y_3|\theta_{13}, \theta_{23})\} \nonumber \\
&= \min \{ \log ( 1 + \gamma_{13}) , \frac{1}{2} \log( 1 + \gamma_{13} + \gamma_{23} ) \}. \label{eq:reciprocal-node-3}
\end{align}

Now, note that $\log(1 + \gamma_{13}) \geq \log ( 1 + \gamma_{12}) \geq \log ( 1 + \frac{\gamma_{12}}{2} + \frac{\gamma_{23}}{2}) \geq \frac{1}{2} \log ( 1 + \gamma_{12} + \gamma_{23})$. So, if $R < \frac{1}{2} \log ( 1 + \gamma_{12} + \gamma_{23})$, then \eqref{eq:reciprocal-node-1a}, \eqref{eq:reciprocal-node-1b}, \eqref{eq:reciprocal-node-1c}, \eqref{eq:reciprocal-node-2}, and \eqref{eq:reciprocal-node-3} are satisfied.

Repeating the above analysis for all $b \in \{1,2,\dotsc, B\}$, and choosing a sufficiently large $B$, a much larger $n$, and a sufficiently small $\eta$, the equal rate arbitrarily close to $R$ is achievable.

Taking the supremum of the achievable equal rates, and comparing it with the upper bound, we have Theorem~\ref{theorem:equal-rate-reciprocal}.
\end{proof}

\section{Concluding Remarks}

We have derived the capacity region of the finite-field three-way channel, and the equal-rate capacity of the sender-symmetric and the reciprocal AWGN three-way channels.  The capacity results for the first two channel models are achieved using the non-cooperative (without using feedback) coding scheme, while that for the reciprocal channel is achieved using our proposed cooperative (using feedback) scheme.

In the cooperative coding scheme, node 1 transmits using doubly-indexed codewords. We now show that the equal-rate capacity is also achievable using superposition coding~\cite[p.~119]{elgamalkim2001} (a coding scheme commonly used for sending two independent messages), i.e., $\boldsymbol{X}_1 = \boldsymbol{U}(W_1) + \boldsymbol{V}(W_{2 \oplus 3})$ where the codeletters $U$ and $V$ are independently generated according to the circular symmetric complex Gaussian distribution with variances $\alpha P_1$ and $(1-\alpha)P_1$ respectively, for some $0 \leq \alpha \leq 1$. As an example, consider the decoding of the messages $W_1^{(b)}$ and $W_3^{(b)}$ by node 2, using its received signals in blocks $b$ and $(b+1)$:
\begin{align*}
\boldsymbol{Y}_2^{(b)} &= G_{12} [\boldsymbol{U}(W_1^{(b-1)}) + \boldsymbol{V}(W_{2 \oplus 3}^{(b-1)}) ]  + G_{32}\boldsymbol{X}_3(W_3^{(b)}) \nonumber \\&\quad + \boldsymbol{Z}_2^{(b)} \\
\boldsymbol{Y}_2^{(b+1)} &= G_{12} [\boldsymbol{U}(W_1^{(b)}) + \boldsymbol{V}(W_{2 \oplus 3}^{(b)}) ] + G_{32}\boldsymbol{X}_3(W_3^{(b+1)})  \nonumber \\&\quad + \boldsymbol{Z}_2^{(b+1)}.
\end{align*}

Using backward decoding, it would have decoded $W_3^{(b+1)}$, but not $W_1^{(b-1)}$ and $W_3^{(b-1)}$, which appear to be noise. Following the typical-set decoding arguments as in the proof of Theorem~\ref{theorem:equal-rate-reciprocal}, node 2 can reliably decode $(W_1^{(b)}, W_3^{(b)})$ if
\begin{align}
R &< \log( 1 + \alpha \gamma_{12}) \triangleq R'(\alpha) \label{eq:super-1} \\
R &< \log( 1 + (1-\alpha) \gamma_{12}) + \log ( 1 + \frac{\gamma_{23}}{1 + \gamma_{12}}) \triangleq R''(\alpha) \label{eq:super-2}\\
2R &<\log(1 + \gamma_{12} + \gamma_{23}). \label{eq:super-3}
\end{align}
where \eqref{eq:super-1} and \eqref{eq:super-2} ensure reliable decoding of $W_1^{(b)}$ and $W_3^{(b)}$ respectively assuming that the other messages has been decoded correctly, and \eqref{eq:super-3} ensures that both $W_1^{(b)}$ and $W_3^{(b)}$ are decoded reliably.
Comparing these rate bounds to those using double-index encoding [i.e., \eqref{eq:reciprocal-node-2}], we note that \eqref{eq:super-1} and \eqref{eq:super-2} appear to be more restrictive. However, maximizing the above rates over $\alpha$, we have that $\max_{0 \leq \alpha \leq 1} \min \{R'(\alpha), R''(\alpha) \} = \log ( 1 + \frac{(1+\gamma_{12})(\gamma_{12} + \gamma_{23})}{2 (1 + \gamma_{12}) + \gamma_{23}}) \triangleq R''' $. Numerical results show that $R''' \geq \frac{1}{2} \log(1 + \gamma_{12} + \gamma_{23})$ for all $0 \leq \gamma_{23} \leq \gamma_{12}$, suggesting that superposition coding is also able to achieve the equal-rate capacity. The sum-rate constraint is the ``active'' constraint for both double-index encoding and superposition encoding, and since the sum-rate constraint is the same for the two schemes, they achieve the same equal rate.


In the proposed cooperative scheme, node 1 facilitates the message exchange between nodes 2 and 3 by sending the ``network-coded'' messages $W_{2 \oplus 3}$,  while broadcasting its own message at the same time. At first sight, this seems similar to the two-way relay channel in which lattice codes can be used to improve the achievable rates over Gaussian codes~\cite{wilsonnarayananpfisersprintson10,namchunglee09}. However, the gain provided by using lattice codes in the two-way relay channel relies on the fact that the relaying node decodes the function $W_{2 \oplus 3}$ directly instead of the individual messages. This is undesirable in the three-way channel where node 1 must decode $W_2$ and $W_3$.



\end{document}